\pdfminorversion=4
\documentclass[journal,twoside,web]{IEEEtran}
\usepackage{graphicx}
\usepackage{epstopdf}

\usepackage{cite}

\ifCLASSINFOpdf
\else
\fi
\usepackage[font={small}]{caption}
\usepackage{amsmath,amssymb}
\usepackage{amsfonts}
\usepackage{algorithm}
\usepackage{algorithmic}
\usepackage{array}
\usepackage{mathrsfs}
\usepackage{amsmath,amssymb,amsthm}
\usepackage{color}
\usepackage{graphicx}
\usepackage{textcomp}
\usepackage{xcolor}
\usepackage{algorithm}
\usepackage{amssymb}
\usepackage{tabularx}
\usepackage{booktabs}
\usepackage{cite}
\usepackage{graphicx}
\usepackage{epstopdf}
\usepackage{upgreek}
\usepackage{comment}
\usepackage{subfigure}
\usepackage{hyperref}
\usepackage{amsmath,amssymb}
\usepackage{subcaption}
\usepackage{url}
\usepackage{cleveref}
\newtheorem{assumption}{Assumption}
\newtheorem{theorem}{Theorem}
\newtheorem{definition}{Definition}

\newtheorem{remark}{Remark}
\newtheorem{proposition}{Proposition}

\crefname{assumption}{assumption}{Assumptions}

\begin{document}

\title{Safety-Critical Control with Offline-Online Neural Network Inference
}
\author{Junhui Zhang, \IEEEmembership{Member, IEEE}, Sze Zheng Yong, \IEEEmembership{Member, IEEE} and Dimitra Panagou, \IEEEmembership{Senior Member, IEEE}
	\thanks{This work was partially sponsored by the Office of Naval Research (ONR), under grant number N00014-20-1-2395 and NSF grant CNS-2312007. The views and conclusions contained herein are those of the authors only and should not be interpreted as representing those of ONR, the U.S. Navy, NSF or the U.S. Government.
}
  \thanks{Junhui Zhang is with the Department of Robotics at University of Michigan, Ann Arbor, MI 48109, USA (e-mail: junhuiz@umich.edu)}
\thanks{Sze Zheng Yong is with the Department of Mechanical and Industrial Engineering, Northeastern University, Boston, MA 02115, USA (e-mail: s.yong@northeastern.edu).}
 \thanks{Dimitra Panagou is with the Department of Robotics and the Department of Aerospace Engineering at University of Michigan, Ann Arbor, MI 48109, USA (e-mail: dpanagou@umich.edu).}}
\maketitle
\begin{abstract}

This paper presents a safety-critical control framework for an ego agent moving among other agents. The approach infers the dynamics of the other agents, and incorporates the inferred quantities into the design of control barrier function (CBF)-based controllers for the ego agent. The inference method combines offline and online learning with radial basis function neural networks (RBFNNs). The RBFNNs are initially trained offline using collected datasets. 
To enhance the generalization of the RBFNNs, the weights are then updated online with new observations, without requiring persistent excitation conditions in order to enhance the applicability of the method. Additionally, we employ adaptive conformal prediction to quantify the estimation error of the RBFNNs for the other agents' dynamics, generating prediction sets to cover the true value with high probability. Finally, we formulate a CBF-based controller for the ego agent to guarantee safety with the desired confidence level by accounting for the prediction sets of other agents' dynamics in the sampled-data CBF conditions. Simulation results are provided to demonstrate the effectiveness of the proposed method.

\end{abstract}

\begin{IEEEkeywords}
Control barrier function, inference, radial basis function network, offline-online learning, conformal prediction.
\end{IEEEkeywords}

\section{Introduction} 
Ensuring safe motion for autonomous agents is an ongoing and challenging task. 
Control Barrier Functions (CBFs) \cite{ames2016control,garg2024advances,zhang2023novel} enforce the satisfaction of safety constraints along the system trajectories based on conditions that rely on the system model. Earlier work assumes that the dynamics of other agents are known within certain bounds \cite{mustafa2022adversary,parwana2022trust}, which may be limiting or conservative for real-world applications. 
In this work, we develop an algorithm to infer the other agents' dynamics and integrate the learned dynamics into sampled data CBF designs.

Inferring the dynamics of other agents can be seen as a system identification problem. 
Unknown dynamics can be learned via non-parametric, data-driven
approaches 
\cite{jin2023robust}, or via parametric methods 
\cite{lopez2020robust}. 
When there is no prior knowledge about the structure of the unknown dynamics, a three-layer neural network known as radial basis function neural network (RBFNN) can be employed \cite{lee2004adaptive} with the advantage that the universal approximation theorem \cite{park1991universal} ensures that any continuous function can be approximated by a RBFNN arbitrarily closely when the number of neurons is sufficiently large. 
To train the RBFNN, one can use either offline or online approaches \cite{pazouki2015efficient,seghouane2019adaptive}. For offline learning methods, 
generalizing to unseen data and behaviors during the online implementation process is a challenge. 
On the other hand, training RBFNNs online poses challenges in achieving convergence of parameters, especially without the persistent excitation (PE) condition 
\cite{zheng2017relationship}. 

Our first aim in this work 
is to develop an offline-online inference learning method to infer other agents' dynamics. We use collected data to train the RBFNNs offline, and then the weights are updated during the online process. The offline-trained RBFNNs offer a ``warm start" for the online-learning process, ensuring that the estimation error remains relatively small before sufficient data are collected for training the RBFNNs  online. On the other hand, the online-learning process enhances the generalization of offline trained RBFNNs since it keeps tuning their weights using new observations. 

Our next aim is to quantify the estimation error of the learned dynamics so that we can use them in CBF-based control. Conformal prediction is a promising method to quantify uncertainty for arbitrary prediction algorithms \cite{angelopoulos2023conformal}.
In \cite{gibbs2021adaptive}, adaptive conformal prediction (ACP) is proposed that does not rely on exchangeability of data points, although the confidence is only guaranteed in an average sense. This motivates us to employ ACP in the estimation error quantification of our offline-online neural network inference. Specifically, prediction sets are produced online by ACP such that the true value is covered within the prediction sets with a desired probability. 
Note that in \cite{dixit2023adaptive},
prediction sets are embedded into a model predictive control framework, while in the current work, the prediction sets are utilized within a sampled-data CBF design.

The main contributions of this work are summarized here: (1) We propose an offline-online method that utilizes RBFNNs to 
infer the unknown dynamics of other agents. 
(2) We leverage adaptive conformal prediction to online quantify the estimation error of our offline-online trained RBFNNs. (3) We incorporate prediction sets produced by adaptive conformal prediction into a sampled-data CBF framework, ensuring that the ego agent maintains safety with a desired confidence level.

\textrm{Notations:} The 
set of real numbers and non-negative real numbers are denoted by $\mathbb{R}$ and $\mathbb{R}^+$, respectively. 
A continuous function $\beta: (-b,a) \rightarrow (-\infty,+\infty)$ is 
an extended class $\mathcal{K}$ function for $a, b\in \mathbb{R}^{+}$, denoted by $\beta \in \mathcal{K}$, if it is strictly increasing and $\beta(0)=0$. $\|\cdot\|$ and $\|\cdot\|_{\infty}$ represent 2- and infinity-norm, respectively. For $a\in \mathbb{R}$, $\lceil a\rceil$ is the smallest integer greater than or equal to $a$. $\text{Prob}[A]$ is the probability of an event $A$. $L_fh(x)$ is the Lie derivative of $h$ along $f$ at $x$. 

\section{Preliminaries and Problem Formulation}
\subsubsection{Preliminaries}
Consider a control affine system,
\begin{align}
\label{eq:system}
\dot{x}=f(x)+g(x)u,
\end{align}
where $x\in \mathcal{X}\subset \mathbb{R}^{n}$ and $u\in \mathcal{U} \subset \mathbb{R}^{m}$ are the state and control vectors of \eqref{eq:system}. $f:\mathcal{X}\rightarrow \mathbb{R}^{n}$ and $g:\mathcal{X}\rightarrow \mathbb{R}^{n\times m}$ are Lipschitz-continuous functions.  

\begin{definition}\cite{ames2016control}
Let a safety set $S$, i.e., the set of safe states,  for the system \eqref{eq:system} be defined as $S=\{x\in \mathcal{X}:h(x)\geq 0\}$,
where $h: \mathcal{X}\rightarrow \mathbb{R}$ is a continuously differentiable function. The function $h$ is a control barrier function for system \eqref{eq:system} on the set $S$ if there exists $\beta\in \mathcal{K}$, such that
\begin{align*}
	\sup \limits_{u\in \mathbb R^m}\{L_fh(x)+L_gh(x)u\}\geq -\beta(h(x)), \quad \forall x\in S.
\end{align*}
\end{definition}

\subsubsection{Problem Formulation}
In this work, we consider an ego agent with dynamics described by
\begin{align}
\label{ego-agent-model}
\dot{x}_{e}=f(x_{e})+g(x_{e})u_{e},
\end{align}
where $x_{e}\in \mathcal{X}_e\subset \mathbb{R}^{n_e} $ and $u_{e}\in \mathcal{U}_e \subset \mathbb{R}^{m_e}$ are the state and control input vectors of ego agent, respectively. $f:\mathcal{X}_e\rightarrow \mathbb{R}^{n_e}$ and $g:\mathcal{X}_e\rightarrow\mathbb{R}^{n_e\times m_e}$ are Lipschitz-continuous functions. The set of other agents is denoted by $\mathcal{V}=\{1,\cdots,\mathcal{N}\}$.  
Let $x_i\in \mathcal{X}_i\subset\mathbb{R}^{n_i}$ denote the state of the $i$th agent in $\mathcal{V}$, and $\mathbf{x}=[x_{e}^T, x_1^T, x_2^T,\cdots, x_{\mathcal{N}}^T]\in \mathbb{R}^{n}, n=n_e+\sum^{\mathcal{N}}_{i=1} n_i$, which includes all the agents' states. Suppose that each agent $i\in \mathcal{V}$ observes $\mathbf{x}$ and designs its action based on the observations.  Then, the dynamics of agent $i$ are 
$\dot{x}_i=F_i(\mathbf{x})$, where $F_i$ is unknown to ego agent.

The ego agent is assigned to track its reference trajectory $x^r_{e}$ with reference control $u^r_{e}$, and avoid collisions with other agents. One can define functions $h_{i}:\mathbb{R}^{n_e}\times \mathbb{R}^{n_i}\rightarrow \mathbb{R}$ to encode the distance of ego agent w.r.t agent $i\in \mathcal{V}$, so that the set $S_{i}=\{x_e\in \mathcal{X}_e,  x_i\in \mathcal{X}_i: h_{i}(x_e,x_i)\geq 0\}$ encodes the safety set for the ego agent and agent $i$. The CBF condition of the ego agent w.r.t. an agent $i\in \mathcal V$ is 
\begin{align}
\label{CBF condition}
\begin{array}{rl}
\dot{h}_{i}	&=L_fh_i(x_e,x_i)+L_gh_i(x_e,x_i)u_e+\frac{\partial h_{i}}{\partial x_i}\dot{x}_i \\
 &\geq -\beta_{i}(h_{i}(x_e,x_i)),
 \end{array}
\end{align} 
 where $\beta_{i}\in \mathcal{K}$. 
 
The objectives of this work are: \textbf{Problem 1}: \textit{How can the ego agent infer other agents' dynamics?} 
\textbf{Problem 2}: \textit{How can the ego agent quantify the estimation error of the inference algorithm for other agents' dynamics?} \textbf{Problem 3}: \textit{How to design a safe controller for the ego agent incorporating inference of other agents, ensuring that the ego agent safely tracks its reference trajectory as closely as possible?}

\section{Methodology}
In our setup, the ego agent observes states of all agents and updates its control signal at time instants $t_k=k\Delta t, k\in\{0,1,2,\dots\}$, where $\Delta t>0$ is a fixed and sufficiently small sampling time interval. A zero-order hold (ZOH) control law is utilized so that $u_e(t)=u_e(t_k)$, for $t\in [t_{k},t_{k+1})$. More specifically, at each sampling time instant $t_k$, the ego agent observes states of all agents $\mathbf{x}(t_k)$, and needs to update its control signal $u_e(t_k)$ based on a CBF-based controller. 
In Section \ref{sec:inference}, we study how to estimate $\dot{x}_i(t_k)$ at sampling time $t_k$ using online observations and offline collected data so that this estimate can be used in the CBF-based controller. Moreover, we characterize in Section \ref{sec:acp} the estimation error with high average confidence using adaptive conformal prediction. Finally, since the satisfaction of \eqref{CBF condition} at sampling instants only is not sufficient for safety, we design CBF conditions 
for sampled-data systems \cite{breeden2021control} with the incorporation of estimates of other agents' dynamics and their estimation errors in Section \ref{sec:CBF}. This  framework is summarized in Fig. \ref{fig:1} and to facilitate this, we make the following assumptions.

\begin{figure}
	\begin{center}
\includegraphics[width=0.97\columnwidth]{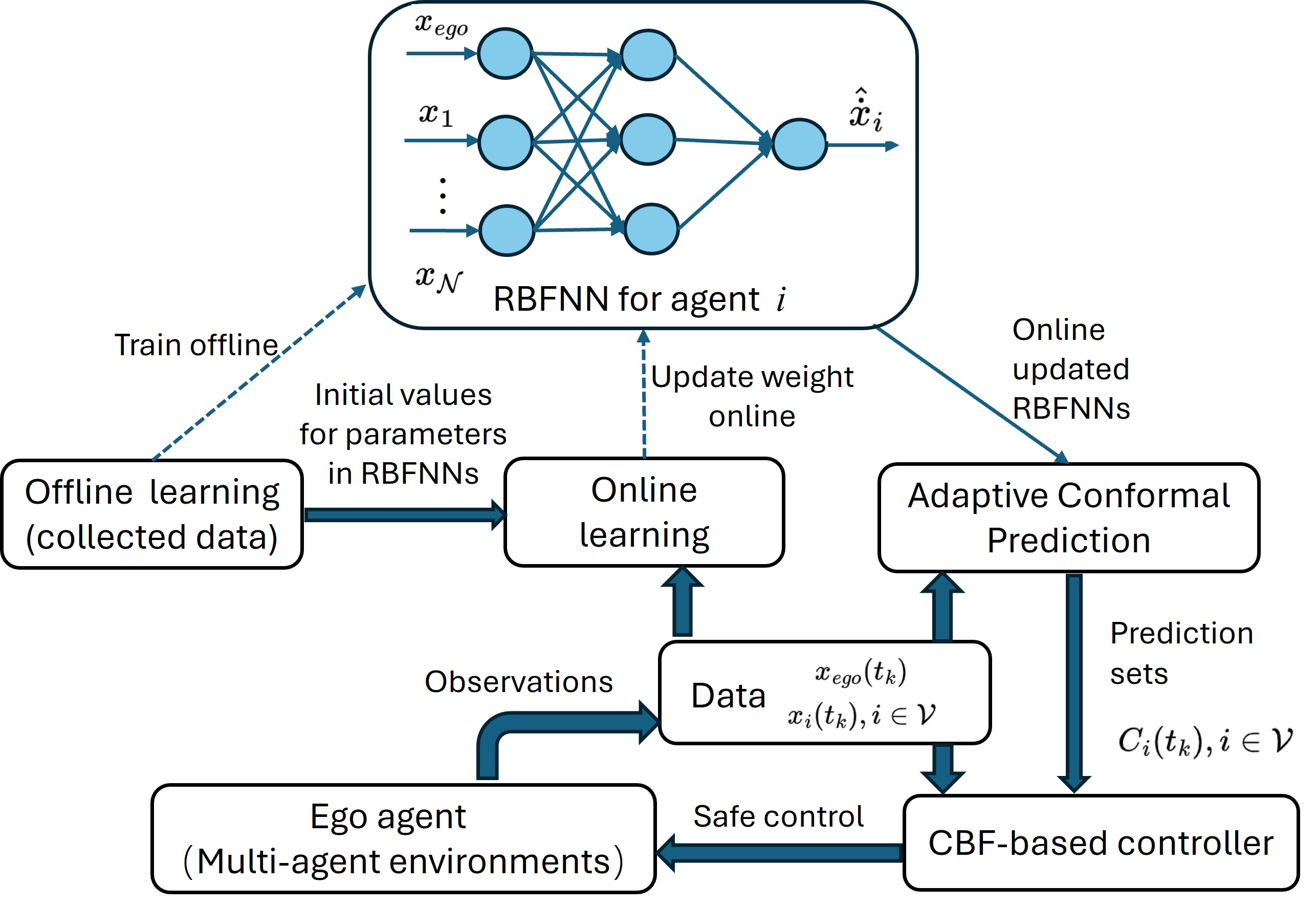}
		\hspace{200in}\vspace{-5mm}
		\caption{The proposed framework of sampled data CBF-based controller with offline-online neural network inference: RBFNNs are utilized to learn other agents' dynamics and are trained in an offline-online manner. Adaptive conformal prediction (ACP) quantifies the estimation error online, and the prediction sets based on ACP are incorporated into the sampled-data CBF-based controller.\label{fig:1}}
  \vspace{-20pt}
	\end{center}
\end{figure}

\begin{assumption}
\label{assp1}
For the ego agent, there exists $\delta_{e,k}\geq0$ such that $\|x_e(t)-x_e(t_k)\|\leq \delta_{e,k}$ for $\forall t\in[t_k,t_{k+1})$, i.e., the change of ego agent's state during the sampling time interval is bounded.\footnote{The bound $\delta_{e,k}$ can be quantified as in \cite{breeden2021control}, that is, $\|x_e(t)-x_e(t_k)\|\leq \sigma (t-t_k) \leq \sigma \Delta t$, where $\sigma =\sup_{x_{e}\in \mathcal{X}_e,u_e\in \mathcal{U}_e}\|f(x_e)+g(x_e)u_e\|$.}
\end{assumption}

\begin{assumption}
\label{assp2}
For agent $i\in \mathcal{V}$, there exist $\delta_{i,k}\geq0$ and $\eta_{i,k}\geq0$ such that $\|x_i(t)-x_i(t_k)\|\leq \delta_{i,k}$
for $\forall t\in[t_k,t_{k+1})$. 
 In addition, we assume that $F_i$ is Lipschitz-continuous, i.e., that $\|\dot{x}_i(t)-\dot{x}_i(t_k)\|=\|F_i(\mathbf{x}(t))-F_i(\mathbf{x}(t_k))\|\leq c_{F_i}\|\mathbf{x}(t)-\mathbf{x}(t_k)\|\leq\eta_{i,k}=c_{F_i}( \sum_j^\mathcal{N}\delta_{j,k}+\delta_{e,k})$, for $t\in[t_k,t_{k+1})$, where $c_{F_i}$ is a known Lipschitz constant for $F_i$.
 \end{assumption}
 
\begin{assumption} \label{assp3}
The historical dynamics of other agents (i.e., $\dot{x}_i$ with a one-step time delay) can be computed from measured state trajectory, e.g., using finite difference via 
$\dot{x}_i(t_{k-1})=\frac{x_i(t_k)-x_i(t_{k-1})}{\Delta t}$.
For simplicity, we assume the approximation error 
is zero. 
If desired, the approximation error can also be quantified and incorporated in a straightforward manner.
\end{assumption}

\subsection{Offline-Online Inference}\label{sec:inference}
First, we develop an offline-online framework to infer other agents' dynamics for addressing \textbf{Problem 1}.

\subsubsection{\textbf{Offline Learning}}

To collect a dataset offline, the ego agent moves in a multi-agent environment with any safe policy. The other agents act following their intentions in reaction to the ego agent. The states of all agents are observed and collected. This means we have $\mathbf{x}(t_k)$, where $k \in \{0,1,\cdots,N\}$. The corresponding dynamics of other agents are also obtained by Assumption \ref{assp3}, that is, $\dot{x}_i(t_k)$, for all $k \in \{0,1,\cdots,N-1\}$.
 
Using the offline dataset, we first estimate the 
functional relationship $F_i(\mathbf{x})$ between the dynamics of other agents $\dot{x}_i$ and the states $\mathbf{x}$. From the well-known Universal Approximation Theorem \cite{park1991universal}, when the number of neurons is sufficiently large, the radial basis function neural network (RBFNN) can approximate any continuous function arbitrarily closely; thus, we use a RBFNN to 
learn the functional relationship $F_i(\mathbf{x})$ as
\begin{align}
\label{RBFNN}
\hat{\dot{x}}_i=\hat{W}_i^{T}\Phi_i(\mathbf{x}),
\end{align}
where $\hat{\dot{x}}_i$ is the estimate for $\dot{x}_i$, $\hat{W}_i\in \mathbb{R}^{(M_i+1) \times n_i}$ is the weight. $\Phi_i(\mathbf{x})=\begin{bmatrix}e^{\frac{-\left|\mathbf{x}-c_{i,1}\right|^2}{2\rho_{i,1}^2}},\cdots, e^{\frac{-\left|\mathbf{x}-c_{i,M}\right|^2}{2\rho_{i,M_i}^2}},1\end{bmatrix}^T$ is the Gaussian basis function with centers $C_i=[c_{i,1},\dots,c_{i,M_i}]\in \mathbb{R}^{n \times M_i}$ and widths $\rho_i=[\rho_{i,1},\dots,\rho_{i,M_i}]^T\in \mathbb{R}^{M_i}$. 

The goal of training the RBFNN is to obtain the optimal weights such that the average of the estimation errors $\epsilon_i(t_k)=\hat{\dot{x}}_i(t_k)-\dot{x}_i(t_k)$ across all samples is minimized, i.e., 
\begin{align}
\label{offline optimization}
	\min_{W_i} \frac{1}{N} \sum_{k=0}^{N-1} \| \epsilon_i(t_k)\|^2.
\end{align} 

The RBFNN can be trained offline with methods such as gradient-based methods \cite{han2021accelerated} and the orthogonal least-squares methods \cite{chen1991orthogonal}. The centers in the basis functions can be optimized with weights together during the training process or determined by implementing clustering approaches on collected datasets, e.g., the $K$-means method.\footnote{Note that offline training occurs before the system deployment, so it is acceptable for it to take a longer time, if required.} 
{The widths of radial basis functions are manually tuned, e.g.,  close to 1.} 

\subsubsection{\textbf{Online Learning}} 
 
The trained RBFNNs may not accurately estimate behaviors of the other agents that do not belong to the collected dataset. To enhance the generalization of RBFNNs, the weights obtained from offline training are utilized as a warm initialization $\hat{W}_i(t_0)$ and updated with new observations during the online process.
For the online learning algorithm, the convergence of weights usually relies on the conditions of Persistent Excitation (PE), which are challenging to check online. In \cite{chowdhary2010concurrent}, the concurrent learning method was proposed, guaranteeing convergence without PE conditions, provided that the recorded data (either offline or online) contains as many linearly independent elements as the dimensions of the basis function.
The key idea is that the weights are updated using both instantaneous observed data and the recorded data $D_{z,i}=\{\mathbf{x}(m) \;| \; m=1,\cdots,\xi_i\}$ that has as many linearly independent elements as the dimension of the basis function $\Phi_i$, when it becomes available, as follows:
\begin{align}
\label{W_Update law}
\begin{array}{l}
\hat{W}_i(t_{k})=\\ \hspace{-0.1cm}
\begin{cases}\hat{W}_i(t_{k-1})\hspace{-0.05cm}-\hspace{-0.05cm}\Gamma_1 \Phi_i(\mathbf{x}(t_{k-1}))\epsilon_i(t_{k-1})\Delta t, & \hspace{-0.2cm}\textrm{rank}(Z_i)\hspace{-0.05cm}<\hspace{-0.05cm}M_i\hspace{-0.05cm}+\hspace{-0.05cm}1,\\
\begin{array}{l}\hat{W}_i(t_{k-1})-[\Gamma_1 \Phi_i(\mathbf{x}(t_{k-1}))\epsilon_i(t_{k-1})\\
 +\sum^{\xi_i}_{m=1} \Gamma_2 \Phi_i(\mathbf{x}(m))\epsilon_i(m)]\Delta t,
\end{array} & \hspace{-0.2cm}\text{otherwise},
\end{cases}
\end{array}
\end{align}
where $Z_i=[\Phi_i(\mathbf{x}(1)),\cdots,\Phi_i(\mathbf{x}(\xi_i))]$, the estimation error $\epsilon_i(t_{k-1})$ is defined in above \eqref{offline optimization}, $\epsilon_i(m)$ is the estimation error for the dynamics of agent $i$ at the $m$th data point of the recorded data $D_{z,i}$, and $\Gamma_1$ and $\Gamma_2$ are positive constants. 

In the above, we adopt a standard adaptive law from the adaptive control literature  before we collect enough data for $D_{z,i}$, that is, when $\textrm{rank}(Z_i)<M_i+1$, $Z_i=[\Phi_i(\mathbf{x}(1)),\cdots,\Phi_i(\mathbf{x}(\xi_i))]$ and switch to concurrent learning when $\textrm{rank}(Z_i)=M_i+1$. As discussed in \cite[Theorem 1]{djaneye2019gradient}, when the recorded data satisfies the above rank condition, the weights converge to a small bounded set around its ideal value. 

\subsection{Adaptive Conformal Prediction} \label{sec:acp} 
Next, we aim to quantify the estimation error of the proposed offline-online inference method and solve \textbf{Problem 2}.

At sampling time $t_k$, the ego agent observes the state $\mathbf{x}(t_{k})$, and the other agents' dynamics at $t_{k-1}$ are calculated by Assumption \ref{assp3}. The weight estimation $\hat{W}_i$ of RBFNN is updated online under 
\eqref{W_Update law} with data point $(\mathbf{x}(t_{k-1}),\dot{x}(t_{k-1}))$ at $t_{k-1}$ and recorded data $D_z$. Subsequently, the estimate $\hat{\dot{x}}_i(t_k)$ of other agents' dynamics at time $t_k$ are obtained using the RBFNN \eqref{RBFNN}; however, no estimation error in the prediction of the RBFNN is available. Conformal prediction determines a prediction set that covers the true value with a desired probability. The original conformal prediction assumes exchangeability of data points. However, in our problem, 
the state $\mathbf{x}(t_k)$ depends on the state $\mathbf{x}(t_{k-1})$ and agents' actions at $t_{k-1}$, violating the exchangeability condition. In \cite{gibbs2021adaptive}, an adaptive conformal prediction (ACP) method was further proposed that does not rely on the exchangability condition, 
    with coverage confidence guarantees in an average sense. We employ the ACP to quantify the estimation error of dynamics at $t_k$ estimated by the RBFNN.

Let $D_{c,i}(t_k)=\{(\mathbf{x}(t_{k-(L+1-j)}), \dot{x}_i(t_{k-(L+1-j)})), j\in\{1,2,\dots,L\} \}$ denote the set consisting of the most recent $L$ data points with observed states and calculated historical dynamics by Assumption \ref{assp3} at the sampling time $t_k$. At the subsequent sampling time, the oldest sample is removed, and a new sample is added to the set. In the online conformal prediction framework, $D_{c,i}(t_k)$ acts as the calibration dataset. That is, the most recent $L$ data points are utilized to predict the estimation error for time $t_k$.

After updating the weight $\hat{W}_i$ using \eqref{W_Update law} 
at time $t_k$, the nonconformity scores, i.e., estimation errors of the RBFNN on the calibration dataset are calculated as:
\begin{align}
R_{i,j}=\|\dot{x}_i(t_{k-(L+1-j)})-\hat{W}_i^{T}\Phi_i(\mathbf{x}(t_{k-(L+1-j)}))\|_{\infty}, 
\end{align}
where $j\in \{1,2,\cdots,L\}$. 
Due to the dependence between time-series data points, inspired by ACP in \cite{gibbs2021adaptive}, we use a recursively updated failure probability $\alpha_i(t_k)$ instead of consistently using a fixed target failure probability $\alpha_i \in (0,1)$. From \cite{dixit2023adaptive}, the width of the prediction set for $\hat{\dot{x}}_i(t_k)$ is the $(1-\alpha_i(t_k))$th quantile of the sequence of $R_{i,1},\cdots,R_{i,L},\infty$, and when $\alpha_i(t_k)\in [\frac{1}{L+1}, 1)$, the width can be calculated by:
\begin{align}
\label{Calculate width}
 q_i(t_k)=\lceil(1-\alpha_i(t_k))(L+1)\rceil\text{th smallest of } \{R_{i,m}\}_{m=1}^L,
\end{align}
with some special cases discussed in Remark 1. 
Then, the prediction set $C_{i}(t_k)$ is:
\begin{align}\label{eq:calibrationset}
C_{i}(t_k)=\{\dot{x}_i \; | \; \|\dot{x}_i-\hat{\dot{x}}_i(t_k)\|_{\infty}\leq q_i(t_k)\}. 
\end{align}
For the adaptive law of miscoverage level $\alpha_i(t_k)$, we define $err_i(t_{k-1})=1$ if $\dot{x}_i(t_{k-1})\notin C_{i}(t_{k-1})$, and $err_i(t_{k-1})=0$ otherwise, where $\dot{x}_i(t_{k-1})$ is calculated by Assumption \ref{assp3}. Then, the $\alpha_i(t_k)$ increases if the prediction set successfully covers the true value at the previous time step and decreases otherwise. The update law of $\alpha_i(t_k)$ is given by 
\begin{align}
\label{Update law of alpha_t}
\alpha_i(t_k)=\alpha_i(t_{k-1})+\gamma(\alpha_i-err_i(t_{k-1})),
\end{align}
where $\gamma$ is the learning rate. Then, from \cite{gibbs2021adaptive, dixit2023adaptive}, the prediction set in \eqref{eq:calibrationset} is guaranteed to cover the true dynamics of other agents with high average probability:

\begin{proposition}\cite{gibbs2021adaptive, dixit2023adaptive}
\label{prop1}
Consider a time horizon $[t_0,T]$, with sampling time instants $\{t_0, t_1, t_2, \cdots, t_K\}$. For each sampling time $t_k$, where $k\in\{1,2,\cdots,K\}$, the adaptive failure probability $\alpha_i(t_k)$ is 
updated using recursion \eqref{Update law of alpha_t} with a learning rate $\gamma>0$, a fixed target failure probability $\alpha_i\in(0,1)$ and an initial value $\alpha_i(t_0)$,
then, it holds that  
\begin{align}
\label{Probability}
\textstyle\frac{1}{K}\sum^{K-1}_{k=0}\text{Prob}[\dot{x}_i(t_k)\in C_i(t_k)]\geq 1-\alpha_i-\mathcal{P}_i(K),
\end{align}
    where $\mathcal{P}_i(K)=\frac{\alpha_i(t_0)+\gamma}{K\gamma}$, and  $\lim_{K\to +\infty}\mathcal{P}_i(K)=0$.
\end{proposition}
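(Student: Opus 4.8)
The plan is to prove a deterministic, pathwise bound on the empirical miscoverage frequency and then pass to expectations. Since $err_i(t_k)\in\{0,1\}$ is exactly the indicator that $\dot{x}_i(t_k)\notin C_i(t_k)$, we have $\text{Prob}[\dot{x}_i(t_k)\in C_i(t_k)]=1-\mathbb{E}[err_i(t_k)]$, so the left-hand side of \eqref{Probability} equals $1-\mathbb{E}\bigl[\tfrac{1}{K}\sum_{k=0}^{K-1}err_i(t_k)\bigr]$. It therefore suffices to show, for every sample path, that $\tfrac{1}{K}\sum_{k=0}^{K-1}err_i(t_k)\le \alpha_i+\mathcal{P}_i(K)$; the expectation then inherits the same upper bound, and rearranging yields \eqref{Probability}.

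First I would telescope the recursion \eqref{Update law of alpha_t}. Summing $\alpha_i(t_k)-\alpha_i(t_{k-1})=\gamma(\alpha_i-err_i(t_{k-1}))$ over $k=1,\dots,K$ collapses the left-hand side to $\alpha_i(t_K)-\alpha_i(t_0)$ and produces the exact identity
\[ \frac{1}{K}\sum_{k=0}^{K-1}err_i(t_k)=\alpha_i+\frac{\alpha_i(t_0)-\alpha_i(t_K)}{K\gamma}. \]
At this point the entire claim reduces to a single inequality on the terminal iterate: if $\alpha_i(t_K)\ge-\gamma$, then the second term is at most $\frac{\alpha_i(t_0)+\gamma}{K\gamma}=\mathcal{P}_i(K)$, which is precisely the stated bound.

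The crux, and the step I expect to demand the most care, is the invariant $\alpha_i(t_k)\ge-\gamma$ for all $k$, which I would prove by induction starting from $\alpha_i(t_0)\in[0,1]$. The mechanism rests on the boundary behavior of the conformal set flagged in Remark 1: whenever $\alpha_i(t_{k-1})<0$, the target level $1-\alpha_i(t_{k-1})$ exceeds $1$, so the width rule \eqref{Calculate width} selects the $\infty$ entry of the augmented score sequence $R_{i,1},\dots,R_{i,L},\infty$; the set $C_i(t_{k-1})$ then equals all of $\mathbb{R}^{n_i}$, coverage is automatic, $err_i(t_{k-1})=0$, and the update strictly increases $\alpha$ by $\gamma\alpha_i>0$. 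Because each step moves $\alpha$ by at most $\gamma$ in magnitude (as $|\alpha_i-err_i(t_{k-1})|\le1$), $\alpha$ is pushed back upward before it can cross $-\gamma$, so it never escapes the band $[-\gamma,1+\gamma]$. The delicate part is matching the sign conventions of the quantile index to the guaranteed-coverage regime (set $=\mathbb{R}^{n_i}$) versus the guaranteed-failure regime (empty set when $\alpha_i(t_{k-1})>1$), so that the induction closes cleanly at both boundary cases.

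Finally I would combine the telescoped identity with the invariant $\alpha_i(t_K)\ge-\gamma$ to obtain the pathwise bound $\tfrac{1}{K}\sum_{k=0}^{K-1}err_i(t_k)\le\alpha_i+\mathcal{P}_i(K)$, take expectations, and rearrange to recover \eqref{Probability}; the limit $\lim_{K\to\infty}\mathcal{P}_i(K)=0$ is then immediate from the closed form $\mathcal{P}_i(K)=\frac{\alpha_i(t_0)+\gamma}{K\gamma}$.
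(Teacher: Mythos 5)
Your proposal is correct and follows essentially the same route as the proof the paper relies on: the paper gives no proof of its own for Proposition~\ref{prop1}, deferring entirely to the cited references, and the argument there is exactly your telescoping of \eqref{Update law of alpha_t} combined with the boundary invariant $\alpha_i(t_k)\geq-\gamma$ (enforced because a negative miscoverage level yields a maximal prediction set, hence $err_i=0$; cf.\ Remark~\ref{rem:1}, where the $\infty$-quantile is replaced by $R_{i,\max}$ with the same effect). Your observation that only the lower boundary case is needed for the one-sided bound is accurate, and passing from the pathwise inequality to expectations recovers \eqref{Probability} exactly as in the cited works.
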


\begin{remark}\label{rem:1}
From the recursion \eqref{Update law of alpha_t}, it is possible for $\alpha_i(t_k)$ to fall outside the range $[\frac{1}{L+1}, 1)$ during the online process, causing the \eqref{Calculate width} to fail in
finding the widths of prediction sets in these instants. 
To address this, we enforce $q_i(t_k)=R_{i,\max}$ when $\alpha_i(t_k)<\frac{1}{L+1}$, ensuring a sufficiently large prediction set to definitely cover the true value. The value of $R_{i,\max}$ can be determined by the potential maximum value of the dynamics of agent $i$ in real-world applications. For example, if the dynamics represent the speed of an agent, $R_{i,\max}$ can be set to the maximum speed limit of agent $i$. Similarly, we enforce $q_i(t_k)=R_{i,\min}=0$ when $\alpha_i(t_k)\geq1$. 
\end{remark}

\begin{remark} Note that the result in Proposition \ref{prop1} is independent of the calibration dataset used, including the choice of a moving horizon of length $L$ in Section \ref{sec:acp}. This can be observed from scrutinizing its proof in \cite[Lemma 4.1]{gibbs2021adaptive} that holds as long as $err_i(t_{k-1})$ is binary, regardless of the choice of calibration set that determines it. However, the choice of calibration dataset does affect the width of the prediction sets and how to optimally select this with a constraint on the calibration dataset size is a subject of ongoing research. 
\end{remark}

\subsection{CBF-based Controller with Inference}\label{sec:CBF}
To solve \textbf{Problem 3}, given the prediction set at $t_k$ by ACP, we can further find a prediction set for all $t\in [t_k,t_{k+1})$. By Assumption \ref{assp2}, the change of agent $i$'s dynamics during the sampling time interval is bounded, that is, $\|\dot{x}_i(t)-\dot{x}_i(t_k)\|\leq \eta_{i,k}$, for $t\in[t_k,t_{k+1})$. Then, from \eqref{eq:calibrationset}, we can define 
\begin{align}
\mathcal{C}_{i,k}=\{\dot{x}_i \;| \; \|\dot{x}_i-\hat{\dot{x}}_i(t_k)\|_{\infty}\leq q_i(t_k)+\eta_{i,k}\},
\end{align}
to ensure that $\dot{x}_i(t)\in \mathcal{C}_{i,k}, \forall t\in [t_k,t_{k+1})$, if $\dot{x}_i(t_k)\in C_{i}(t_k)$. 

Using this, we derive sampled-data CBF conditions to guarantee safety with high average probability.

\begin{theorem}
\label{theorem1} Consider a time horizon $[t_0,T]$ with sampling time instants $\{t_0, t_1, t_2, \cdots, t_K\}$ and  safety set $S_i$ for $i\in \mathcal{V}$, let $c_{f,i}$, $c_{g,i}$ and $c_{\beta,i}$ be Lipschitz constants for $L_fh_i$, $L_gh_i$ and $\beta_i(h_i)$ with respect to the 2-norm, respectively. Under \cref{assp1,assp2,assp3}, if the ZOH control of ego agent  at each $t_k\in\{t_0, t_1, t_2, \cdots, t_K\}$ satisfies the following sampled-data CBF conditions:
\begin{align}
\label{Sampling-data CBF}
\begin{array}{r}L_fh_i(x_e(t_k),x_i(t_k))+L_gh_i(x_e(t_k),x_i(t_k))u_{e}(t_k)\\+M_{i,k}\geq 
-\beta_{i}(h_{i}(x_e(t_k),x_i(t_k)))+\phi_{i,k},
\end{array}
\end{align}
with $\phi_{i,k}=(c_{f,i}+c_{g,i}u_{e,\max}+c_{\beta,i})(\delta_{e,k}+\delta_{i,k})$, $u_{e,\max}=\min_{u_e \in \mathcal{U}_e, \|u_e\|\le \delta} \delta$ and 
\begin{align}
\label{Define M}
\textstyle M_{i,k}=\min_{{\dot{x}}_i\in \mathcal{C}_{i,k}}\{\frac{\partial h_{i}}{\partial x_i}{\dot{x}}_i\}=\sum_{r=1}^{n_i}b_{i,k,r},
\end{align}
with $b_{i,k,r}=\min\{\mu_{i,r}(\hat{\dot{x}}_{i,r}(t_k)-q_i(t_k)-\eta_{i,k}),\mu_{i,r}(\hat{\dot{x}}_{i,r}(t_k)+q_i(t_k)+\eta_{i,k})\}$, where $\mu_{i,r}$ is the $r$th column of ${\frac{\partial h_{i}}{\partial x_i}}$ and $\hat{\dot{x}}_{i,r}(t_k)$ is the $r$th row of $\hat{\dot{x}}_{i}(t_k)$, 
then safety is guaranteed with a probability of at least $1-\alpha_i-\mathcal{P}_i(K)$ on average, i.e., 
\begin{align}
\begin{array}{r}
\frac{1}{K}\sum^{K-1}_{k=0}\text{Prob} [(x_e(t),x_i(t))\in S_{i}, \forall t\in [t_k,t_{k+1})]\\
\geq 1-\alpha_i-\mathcal{P}_i(K),
\end{array}
\end{align}
with $\mathcal{P}_i(K)=\frac{\alpha_i(t_0)+\gamma}{K\gamma}$ and $\lim_{K \to +\infty}\mathcal{P}_i(K)=0$.
\end{theorem}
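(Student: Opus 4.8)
The plan is to show that the sampled-data condition \eqref{Sampling-data CBF} at each $t_k$, combined with the event that ACP covers the true dynamics (i.e., $\dot{x}_i(t_k)\in C_i(t_k)$), deterministically forces the continuous-time CBF condition \eqref{CBF condition} to hold throughout $[t_k,t_{k+1})$; the probabilistic claim then follows by an event-inclusion argument together with Proposition \ref{prop1}. To build the deterministic per-interval implication, I would fix $t\in[t_k,t_{k+1})$, recall that the ZOH law gives $u_e(t)=u_e(t_k)$, and bound the three model-dependent terms of \eqref{CBF condition} at time $t$ by their values at $t_k$. Since $\|(x_e(t),x_i(t))-(x_e(t_k),x_i(t_k))\|\le\delta_{e,k}+\delta_{i,k}$ by Assumptions \ref{assp1} and \ref{assp2}, the terms $L_fh_i$, $L_gh_i\,u_e(t_k)$ (using $\|u_e(t_k)\|\le u_{e,\max}$), and $-\beta_i(h_i)$ deviate from their $t_k$ values by at most $c_{f,i}(\delta_{e,k}+\delta_{i,k})$, $c_{g,i}u_{e,\max}(\delta_{e,k}+\delta_{i,k})$, and $c_{\beta,i}(\delta_{e,k}+\delta_{i,k})$, respectively; these three contributions sum exactly to the margin $\phi_{i,k}$.

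Next I would handle the unknown-dynamics term $\frac{\partial h_i}{\partial x_i}\dot{x}_i(t)$. On the coverage event $\dot{x}_i(t_k)\in C_i(t_k)$, the containment $\dot{x}_i(t)\in\mathcal{C}_{i,k}$ for all $t\in[t_k,t_{k+1})$ follows from the triangle inequality and Assumption \ref{assp2}, as argued just before the theorem, so $\frac{\partial h_i}{\partial x_i}\dot{x}_i(t)\ge M_{i,k}$ by the definition of $M_{i,k}$ as the minimum of that linear functional over $\mathcal{C}_{i,k}$. Here I would verify the closed form $M_{i,k}=\sum_r b_{i,k,r}$ by observing that $\mathcal{C}_{i,k}$ is an axis-aligned box (an $\ell_\infty$-ball), so minimizing the separable functional $\sum_r \mu_{i,r}\dot{x}_{i,r}$ decouples coordinate-wise, each coordinate attaining its minimum at one of the two box endpoints. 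Substituting \eqref{Sampling-data CBF} and this worst-case bound into the lower bound for the continuous-time left-hand side, the $\phi_{i,k}$ term cancels all sampling-induced deviations except the residual $c_{\beta,i}(\delta_{e,k}+\delta_{i,k})$, which is exactly what is needed to dominate $-\beta_i(h_i(x_e(t),x_i(t)))$; hence \eqref{CBF condition} holds at $t$.

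With $\dot{h}_i\ge-\beta_i(h_i)$ holding for all $t\in[t_k,t_{k+1})$ and $h_i\ge 0$ at $t_k$, forward invariance of $S_i$ on the interval follows from the standard comparison-lemma argument for CBFs. This yields the deterministic event inclusion $\{\dot{x}_i(t_k)\in C_i(t_k)\}\subseteq\{(x_e(t),x_i(t))\in S_i,\ \forall t\in[t_k,t_{k+1})\}$, so that $\text{Prob}[(x_e(t),x_i(t))\in S_i,\ \forall t\in[t_k,t_{k+1})]\ge\text{Prob}[\dot{x}_i(t_k)\in C_i(t_k)]$. Averaging over $k=0,\dots,K-1$ and invoking the ACP guarantee \eqref{Probability} of Proposition \ref{prop1} gives the claimed bound $1-\alpha_i-\mathcal{P}_i(K)$ with $\mathcal{P}_i(K)=\frac{\alpha_i(t_0)+\gamma}{K\gamma}\to 0$.

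I expect the main obstacle to be the bookkeeping in the per-interval step, specifically ensuring that the Lipschitz deviations combine into precisely $\phi_{i,k}$ and that the leftover $c_{\beta,i}$ margin is correctly oriented so that the continuous-time inequality is genuinely recovered rather than merely approximated. A secondary subtlety is the logical handling of $h_i(t_k)\ge 0$ in the event-inclusion step: the deterministic implication presumes safety at the start of each interval, so I would either assume the initial configuration is safe and note that the recovered continuous-time condition propagates safety across successive intervals whenever coverage holds, or state the guarantee conditionally, consistent with the average-sense coverage provided by Proposition \ref{prop1}.
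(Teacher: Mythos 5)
Your proposal is correct and follows essentially the same route as the paper: a deterministic per-interval argument showing that the sampled-data condition \eqref{Sampling-data CBF}, together with the coverage event $\dot{x}_i(t_k)\in C_i(t_k)$ and the Lipschitz/ZOH bounds that produce the margin $\phi_{i,k}$ and the worst-case term $M_{i,k}$, recovers $\dot{h}_i\geq-\beta_i(h_i)$ on $[t_k,t_{k+1})$, followed by event inclusion and Proposition \ref{prop1}. Your added care about the box-minimization closed form for $M_{i,k}$ and about needing $h_i\geq 0$ at the start of each interval (which the paper's proof leaves implicit in its final appeal to Proposition \ref{prop1}) is a refinement of, not a departure from, the paper's argument.
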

\begin{proof}
If $\dot{x}_i(t_k)\in C_i(t_k)$, $k<K$,   then for all $t\in [t_k, t_{k+1})$, it holds that,
\begin{align*}
\begin{array}{l}
L_fh_i(x_e,x_i)+L_gh_i(x_e,x_i)u_e(t_k)+\frac{\partial h_{i}}{\partial x_i}\dot{x}_i\\
\stackrel{\text{\eqref{Define M}}}{\geq} L_fh_i(x_e(t_k),x_i(t_k))+L_gh_i(x_e(t_k),x_i(t_k))u_e(t_k)\\
\quad +M_{i,k}+\beta_{i}(h_{i}(x_e(t_k),x_i(t_k)))\\
\quad +[L_fh_i(x_e,x_i)-L_fh_i(x_e(t_k),x_i(t_k))\\
\quad +(L_gh_i(x_e,x_i)-L_gh_i(x_e(t_k),x_i(t_k)))u_e(t_k)\\
\quad +\beta_{i}(h_{i}(x_e,x_i))-\beta_{i}(h_{i}(x_e(t_k),x_i(t_k)))]-\beta_{i}(h_{i}(x_e,x_i))\\
\stackrel{\text{\eqref{Sampling-data CBF}}}{\geq}  \phi_{i,k}-(c_{f,i}+c_{g,i}u_{e,\max}+c_{\beta,i})(\|x_e-x_e(t_k)\|\\
\quad +\|x_i-x_i(t_k)\|)-\beta_{i}(h_{i}(x_e,x_i))\\
\geq -\beta_{i}(h_{i}(x_e,x_i)).
\end{array}
\end{align*}
The argument $t$ in $x_e(t)$, $x_i(t)$, and $\dot{x}_i(t)$ is omitted for brevity. Thus, under the CBF-ZOH control satisfying \eqref{Sampling-data CBF}, if $\dot{x}_i(t_k)\in C_i(t_k)$, we obtain $\dot{h}_i(x_e,x_i)\geq-\beta_i(h_i(x_e,x_i))$ for all $t\in [t_k,t_{k+1})$ and all $k$. 
Finally, combining with Proposition \ref{prop1}, the conclusion holds.
\end{proof}

Based on Theorem \ref{theorem1}, the sampled-data CBF-based optimization problem is constructed for each sampling time $t_k$ (this argument is omitted below for brevity) as:
\begin{subequations}
\label{CBF-QP with inference}
\begin{align}
	&u_{e}(\mathbf{x})=\arg\min_{u_e} \|u_e-u^r_{e}\|^2 \\
        \nonumber
	\text{s.t.} \ \ &L_fh_i(x_e,x_i)+L_gh_i(x_e,x_i)u_e+M_{i,k}\\
 &\ \ \ \ \geq-\beta_{i}(h_{i}(x_e,x_i))+\phi_{i,k},\\
 &u_{e}\in \mathcal{U}_e,\end{align} 
\end{subequations}
where $i\in \mathcal{V}$, $u^r_{e}$ is the reference control for ego agent. To summarize, the control input $u_{e}$ is generated by solving the CBF-based optimization \eqref{CBF-QP with inference} at each sampling time $t_k$ and we apply zero-order hold (ZOH)  for $t\in [t_k,t_{k+1})$, which enables the ego agent to track its reference trajectory as closely as possible with guaranteed safety with high average confidence. 

\vspace{-5pt}
\section{Case Study}

The proposed method is implemented in simulation to demonstrate its effectiveness. The scenario involves  one ego agent and two other agents moving in a two-dimensional space. The objective of the ego agent is to track its reference trajectory while ensuring safety w.r.t. the other agents. Agent 1 is moving towards the ego agent. Agent 2 moves close to ego agent while going away from agent 1. The ego agent does not know the intents of the other agents, but it can measure the positions of other agents at each sampling time. 
Our inference algorithm is utilized to learn their dynamics. In the first step, offline datasets are collected by navigating the ego agent using a CBF-based controller that forces the agent to track various reference trajectories, such as circles, sine waves, and spirals, while remaining safe with respect to other agents. The two other agents move with single integrator model following their intentions in reaction to the ego agent. 

The ego agent dynamics are given as $\dot{p}_x=v\cos(\theta)$, $\dot{p}_y=v\sin(\theta)$, $\dot{\theta}=\omega$, where $p_x$, $p_y$ are its position coordinates, and $v$ and $\omega$ are its linear and angular velocities that serve as the control inputs.  Let $p_{x,i}$, $p_{y,i}$, $i=1,2$ denote the positions of two other agents. 
We use RBFNNs with 8 neurons to capture the behaviors of agents 1 and 2, respectively. The widths in Gaussian basis function are set as 0.85. By employing \texttt{newrb} in MATLAB neural network toolbox, the RBFNNs are trained offline using the collected datasets, where the weights and centers are optimized to minimize the estimation errors. 

We set the offline-obtained weights as initial values $\hat{W}_i(t_0)$ of the  online weights, and fix the offline-obtained centers $C_i$ of RBFNNs. Then, the update law \eqref{W_Update law} is utilized to tune the weights $\hat{W}_i$ online. 
We set the initial miscoverage/target failure probability as $\alpha_i(t_0)=\alpha_i=0.01$, and the learning rate as $\gamma=0.002$. The prediction sets corresponding to the RBFNN at time $t_k$ 
are obtained by adaptive conformal prediction in Section \ref{sec:acp}. For the ego agent, the CBF for the unicycle model with dynamic obstacles in \cite{wu2016safety} is chosen as safety sets and the control signal is obtained from the sampled-data CBF-based optimization problem in \eqref{CBF-QP with inference} at $t_k$ with ZOH.\footnote{Codes: \url{https://github.com/MrJUNHUIZHANG/CBF_NN_inference}.}
\begin{figure}[!htb]
	\begin{center}
 \vspace{-5pt}
\includegraphics[width=0.9\columnwidth]{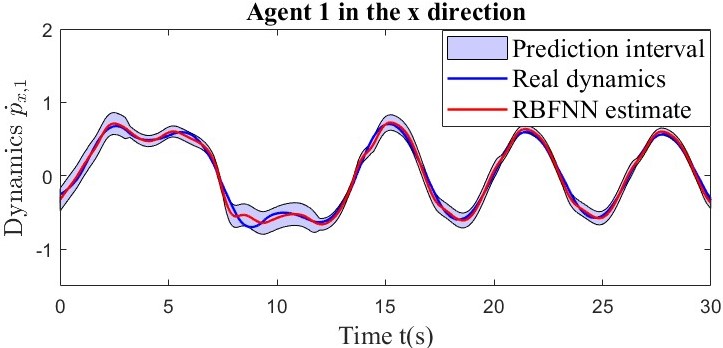}
        \hspace{200in}\vspace{-5mm}
		\caption{ Estimation of the dynamics of agent 1 in the $x$ direction.}
  \vspace{-15pt}
	\end{center}
 \end{figure}
 \begin{figure}[!htb]
	\begin{center}
\includegraphics[width=0.9\columnwidth]{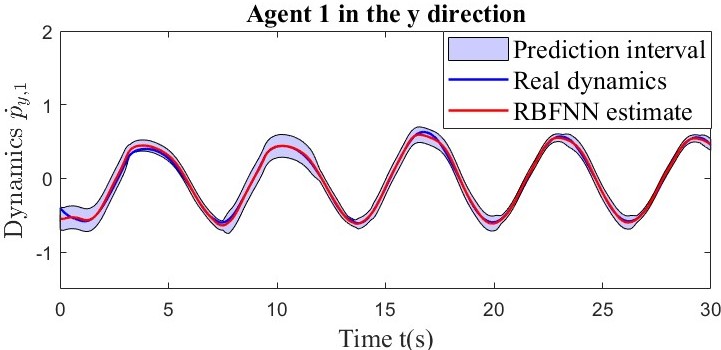}
        \hspace{200in}\vspace{-5mm}
		\caption{ Estimation of the dynamics of agent 1 in the $y$ direction.}
  \vspace{-10pt}
	\end{center}
 \end{figure}
\begin{figure}[!htb]
    \begin{center}
        \vspace{-5pt}
\includegraphics[width=0.9\columnwidth]{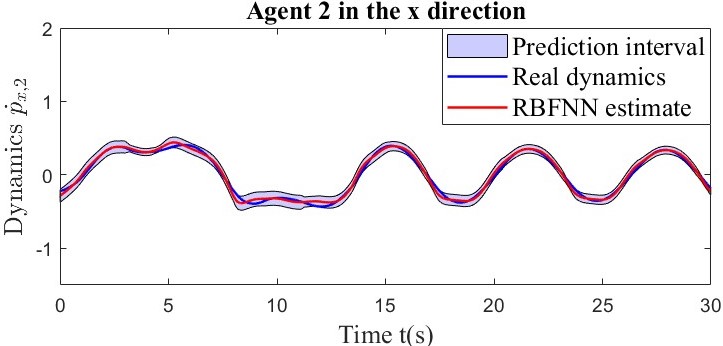}
        \hspace{100in}\vspace{-5mm}
        \caption{Estimation of the dynamics of agent 2 in the $x$ direction.}
         \vspace{-5pt}
    \end{center}
\end{figure}
\begin{figure}[!htb]
    \begin{center}
        \vspace{-10pt}
\includegraphics[width=0.9\columnwidth]{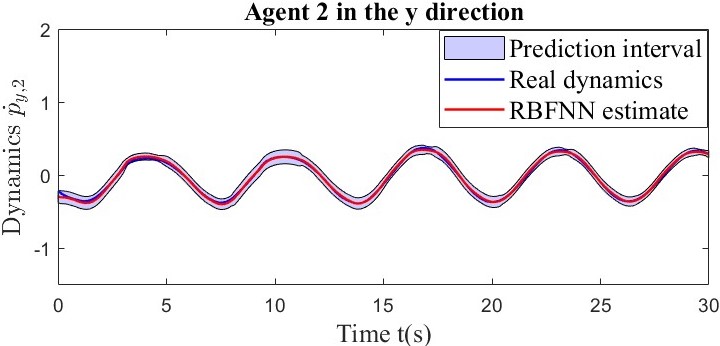}
        \hspace{100in}\vspace{-5mm}
        \caption{Estimation of the dynamics of agent 2 in the $y$ direction.}
         \vspace{-8pt}
    \end{center}
\end{figure}

\begin{figure}[!htb]
\begin{center}	
   \vspace{-10pt}
  \includegraphics[width=0.9\columnwidth]{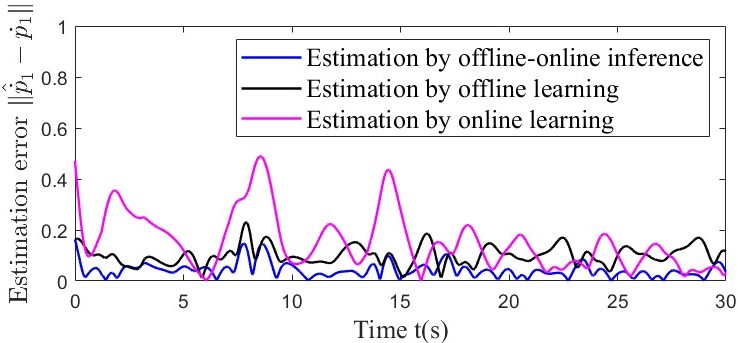}
		\hspace{100in}\vspace{-5mm}
		\caption{ Comparison of estimation performance for the dynamics of agent 1 (proposed vs offline only vs online only).}
  \vspace{-8pt}
	\end{center}
\end{figure}
\begin{figure}[!htb]
\begin{center}	
   \vspace{-10pt}
  \includegraphics[width=0.9\columnwidth]{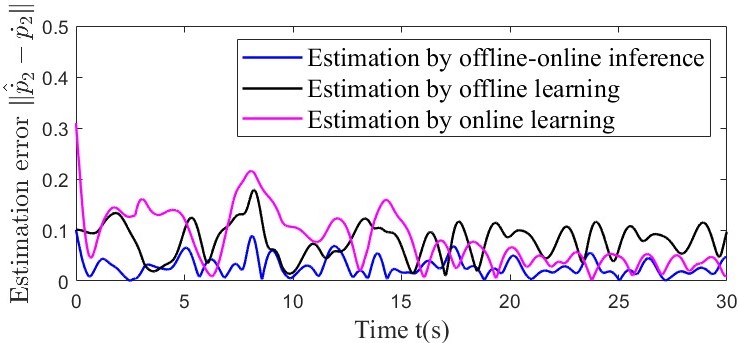}
		\hspace{100in}\vspace{-5mm}
		\caption{Comparison of estimation performance for dynamics of agent 2 (proposed vs offline only vs online only)}
  \vspace{-5pt}
	\end{center}
\end{figure}
\begin{figure}[!htb]
	\begin{center}		 
 \includegraphics[width=0.95\columnwidth]{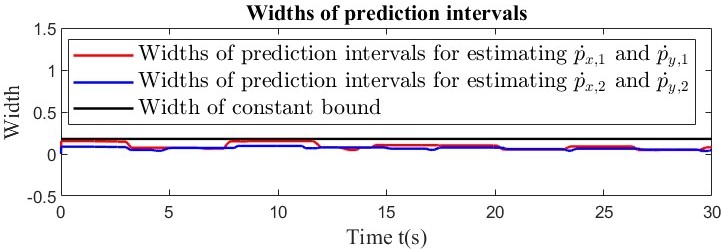}
		\hspace{200in}\vspace{-5mm}
		\caption{Widths of prediction sets for agents 1 and 2 using adaptive conformal prediction.}
   \vspace{-15pt}
	\end{center}
\end{figure}
The estimates of other agents' dynamics during online process are depicted in Figs. 2--5. 
We observe that the prediction sets cover the real values at all times, ensuring safety through the CBF-based control that incorporates prediction sets. 
We also compare the performance of the offline-online inference method via the estimation error against purely offline and online learning approaches, see Figs. 6--7. 

The widths of the prediction sets are shown in Fig. 8. Compared to earlier work in \cite{parwana2022trust}, which utilizes a conservative constant bound and does not attempt to infer other agents' dynamics, our proposed method imposes less conservatism in the CBF-based conditions. Finally, the trajectories of the ego agent, agents 1 and 2  are depicted in Fig. 9, where the colors transitioning from dark to light represent the trajectories of the agents over time. The reference trajectory of the ego agent violates the safety constraint. Nevertheless, the ego agent minimizes the distance between its trajectory and the reference trajectory while ensuring safety, ultimately maintaining a distance of 1.1 from the agents 1 and 2. Collision is avoided during the entire process. In addition, we ran 20 simulation trials with different initial positions, and observed that collisions did not occur, which is well within the fixed target failure probability of $\alpha_i=0.01$. This  demonstrates the effectiveness of our method.
\begin{figure}[!htb]
	\begin{center}	
  \vspace{-5pt}
  \includegraphics[width=0.9\columnwidth]{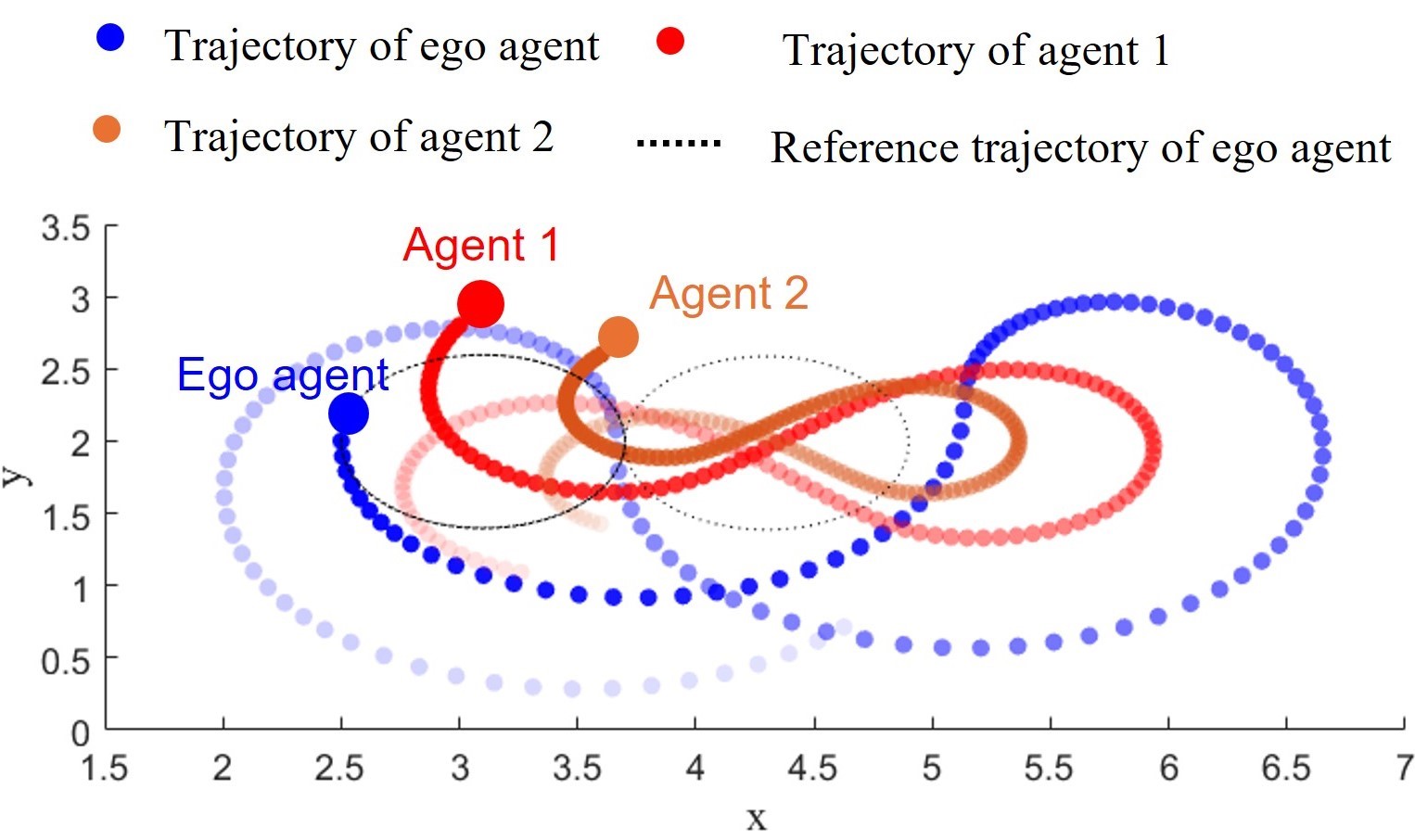}
\hspace{200in}\vspace{-5mm}\caption{Trajectories of ego agent, and agents 1 and 2.}
 \vspace{-10pt}
	\end{center}
\end{figure}

\begin{figure}[!htb]
	\begin{center}
		\vspace{-10pt}
  \includegraphics[width=0.9\columnwidth]{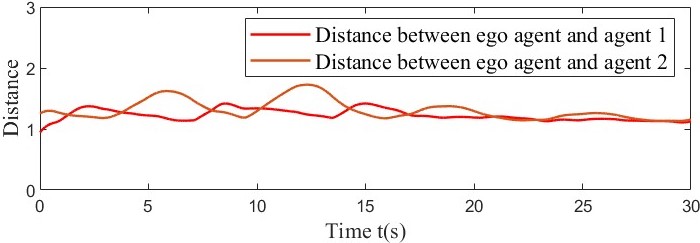}
		\hspace{200in}\vspace{-5mm}
		\caption{Distances between ego agent and agents 1 and 2.}
  \vspace{-19pt}
	\end{center}
\end{figure}

\vspace{-5pt}
\section{Conclusions}

This paper presented an inference method for an ego agent to estimate the dynamics of other agents, and its integration within a safe control synthesis framework. Offline-trained RBFNNs are utilized to learn the dynamics of other agents, and their weights are updated online with new observations. Adaptive conformal prediction is utilized to quantify the estimation error of the RBFNNs for other agents' dynamics by generating prediction sets covering the true value with desired average confidence level. Subsequently, a sampled-data CBF-based optimization problem is formulated and solved to guarantee safety with the desired average confidence level by incorporating the prediction set. 
Finally, a case study  demonstrated the effectiveness of the proposed method. Future work includes extending the methods to distributed scenarios involving cooperative/non-cooperative agents.

\section{Acknowledgements}
We would like to thank Hardik Parwana and Taekyung Kim for inspiring and insightful discussions on this work.

\bibliographystyle{IEEEtran}
\bibliography{bibligraphy}
\end{document}